%% file: main.tex
\documentclass[conference]{IEEEtran}
\IEEEoverridecommandlockouts

\usepackage{cite}
\usepackage{amsmath,amssymb,amsfonts}
\usepackage{algorithmic}
\usepackage{graphicx}
\usepackage{textcomp}
\usepackage{pstricks}
\usepackage{xcolor}
\usepackage{extarrows}

\input{macros}

\newtheorem{theorem}{Theorem}
\newtheorem{lemma}{Lemma}

\def\BibTeX{{\rm B\kern-.05em{\sc i\kern-.025em b}\kern-.08em
    T\kern-.1667em\lower.7ex\hbox{E}\kern-.125emX}}
\begin{document}

\title{On Secure Gradient Coding with Uncoded Groupwise Keys}

\author{
    \IEEEauthorblockN{Xudong You$^{1}$,  Kai Wan$^{1}$, Xiang Zhang$^{2}$, Wenbo Huang$^{1}$,  Robert Caiming Qiu$^{1}$, and Giuseppe Caire$^{2}$}
	\IEEEauthorblockA{  $^1$Huazhong University of Science and Technology, 430074 Wuhan, China \\ 
        $^2$Technische Universitat Berlin, 10587 Berlin, Germany \\ 
		Emails: \{xudong\_you, kai\_wan, eric\_huang, caiming\}@hust.edu.cn, \{xiang.zhang, caire\}@tu-berlin.de}
}

\maketitle

\begin{abstract}
This paper considers a new secure gradient coding problem with uncoded groupwise keys, formalized as a ${\sf (K, N, N_r, M, S)}$ secure gradient coding model, where a user aims to compute the sum of the gradients from ${\sf K}$ datasets with the assistance of ${\sf N}$ distributed servers. We consider  arbitrary heterogeneous data assignment, where each dataset is assigned to at least ${\sf M}$ servers. The user should recover the sum of gradients from the transmissions of any ${\sf N_r}$ servers. The security constraint guarantees that even if the user receives the transmitted messages from all servers, it cannot obtain any other information about the datasets except the sum of  gradients. Compared to existing secure gradient coding works, we introduce a practical constraint on secret keys, namely uncoded groupwise keys,  where the keys are mutually independent and each key is shared by precisely $\Ssf$ servers. An achievable secure gradient coding scheme with uncoded groupwise keys is proposed, which is then proven to be optimal if ${\sf S > M}$ and to be order optimal within a factor of $2$ otherwise. 
\end{abstract}

\begin{IEEEkeywords}
Gradient coding, secure aggregation, uncoded groupwise keys
\end{IEEEkeywords}

\section{Introduction}
\label{section: intro}

Distributed machine learning has become a crucial solution for training complex models on large datasets~\cite{dean2012large, ahmed2013distributed, li2014communication}. A primary challenge  is the high communication cost associated with transferring large amounts of data, such as gradient vectors~\cite{li2014communication}. Another key issue is dealing with stragglers: slow servers that can severely impact computational speed~\cite{dean2013tail}. Recently, coding techniques have been shown to effectively address both stragglers and adversarial server nodes, while also reducing communication overhead in distributed computations~\cite{li2015coded,speedup2018Lee, yu2019lagrange, yu2017polynomial, yu2020straggler, dutta2019optimal, dutta2016short}. 

Specifically, in~\cite{tandon2017gradient}, a framework called gradient coding was introduced, enabling the master node to collect and aggregate gradient vectors with minimal communication cost, while mitigating the impact of stragglers. In this setting, homogeneous data assignment was considered, where each server is assigned  the same number of datasets. Later in~\cite{ye2018communication}, the optimal trade-off  between communication cost, computation load (i.e., the number of datasets assigned to each server), and straggler tolerance was explored. Furthermore, the authors in~\cite{jahani2021optimal} considered arbitrary heterogeneous data assignment, where each dataset may be assigned to a different number of servers, and characterized the optimal communication cost. 

The secure aggregation model for gradient coding was originally proposed in~\cite{wan2022secure}. Besides the decodability constraint at the user side, an additional security constraint was added into the gradient coding problem, which imposes that the user cannot  recover  any other information about the datasets except the sum of gradients. For security, the servers should pre-store some keys assigned by a key distribution server.  
An information-theoretic lower bound on the total key size and several secure distributed coded computation schemes for specific data distribution patterns (e.g., cyclic, repetitive, and mixed) were proposed in~\cite{wan2022secure}, while achieving the optimal communication cost.

However, existing secure gradient coding schemes~\cite{wan2022secure} rely on coded keys, which typically require a trusted key server or private links among servers, limiting their practicality in decentralized systems.
Recently, uncoded groupwise keys were considered in the secure aggregation problem for federated learning (where each user holds an individual dataset)~\cite{wan2024uncoded}.\footnote{\label{foot:FL}An information-theoretic secure aggregation problem using uncoded groupwise keys was formulated in~\cite{wan2024uncoded}, following the secure aggregation problem for federated learning~\cite{bonawitz2017practical}. In~\cite{wan2024uncoded}, each distributed computing node holds its own local data to compute the gradient; thus, secure aggregation against stragglers (or user dropouts) requires two-round transmissions. In contrast, in the gradient coding problem, an additional data assignment phase for servers exists, and because of the redundancy in data assignment, one-round transmission is sufficient for secure aggregation.}
`Uncoded' means that  the keys are mutually independent, and `groupwise' means that each key is shared by a subset of servers. 
Such keys can be generated and shared directly among servers using standard key agreement protocols~\cite{diffie2022democratizing, maurer1993secret, ahlswede1993common, csiszar2004secrecy, gohari2010information, sun2022secure, sun2022compound, li2025capacity}, without requiring a trusted key server or private communication links. These properties make uncoded groupwise keys important for practical distributed learning systems.


This paper is the first which considers the secure gradient coding problem with uncoded groupwise keys. The user aims to compute the sum of gradients over $\Ksf$  datasets with the assistance of $\Nsf$ servers. After receiving the transmission from any ${\sf N_r}$ servers, the user should recover the sum of gradients without obtaining any other information about the datasets. 
By introducing a system parameter $\Ssf$, meaning that each uncoded key is shared by exactly $\Ssf$ servers, we aim to characterize the minimal (normalized) communication cost for arbitrary heterogeneous data assignment, where each dataset is assigned to at least ${\sf M}$ servers.

\subsection{Main Contributions}
Besides formulating the $(\Ksf, \Nsf, \Nsf_r, \Msf, \Ssf)$ secure gradient coding problem with uncoded groupwise keys and  arbitrary heterogeneous data assignment, 
we propose a universal secure gradient coding scheme with an explicit achievable communication cost. 
Our scheme jointly exploits data redundancy and key redundancy to achieve the minimum communication cost required for secure gradient aggregation.
Compared to the communication cost of  the optimal non-secure gradient coding scheme, the proposed scheme achieves the same communication cost when ${\sf S > M}$, and achieves a communication cost within a multiplicative gap of factor $2$ when ${\sf S \leq M}$.

\textit{Paper Organization.}
This paper is organized as follows. Section~\ref{sec: system} introduces the secure gradient coding model with uncoded groupwise keys. Section~\ref{sec: main} presents the main results. Section~\ref{sec: general scheme} describes the proposed general coding scheme. Finally, section~\ref{sec: conclusion} concludes this paper.

\textit{Notation.}
Calligraphic symbols denote sets, bold symbols denote vectors and matrices, and sans-serif symbols denote system parameters. We use $\mid \cdot \mid $ to represent the cardinality of a set or the length  of a vector; $[a : b] := \{a,a+1,\dots,b\}$ and $[n] := [1 : n]$; $\mathbf{1}_n$ and $\mathbf{0}_n$ represent the vertical $n$-dimensional vector whose elements are all 1 and all 0, respectively; $\mathbf{A}^\Tsf$ and $\mathbf{A}^{-1}$ represent the transpose and the inverse of matrix $\mathbf{A}$, respectively; 
the matrix $[\mathbf{A};\mathbf{B}]$ is expressed in a format similar to Matlab, equivalent to $\begin{bmatrix}
     \mathbf{A}\\ 
     \mathbf{B}
 \end{bmatrix} $;$
\mathbf{A}(\Nc,\cdot)$ and $\mathbf{A}(\cdot,\Oc)$ represent the sub-matrices of matrix $\mathbf{A}$ whose rows and columns in the set $\Nc$, respectively; $\mathbf{A}(\Nc,\Oc)$ represents the sub-matrix of matrix $\mathbf{A}$ whose rows are in the set $\Nc$ and columns are in the set $\Oc$; $\mathbf{I}_n$ represents the identity matrix of dimension $n \times n$; $\mathbf{0}_{m \times n}$ represents all-zero matrix of dimension $m \times n$; $\mathbf{A}_{m \times n}$ explicitly indicates that the matrix $\mathbf{A}$ is of dimension $m \times n$; let $\binom{x}{y} = 0$ if $x < 0$ or $y < 0$ or  $x < y$; let $\binom{\Xc}{y} = \{\Sc \subseteq \Xc:|\Sc|=y\}$, where $|\Xc|\ge y>0$; For a set $\Sc$, we denote the $i^{\text{th}}$ smallest element by $\Sc(i)$; $\mathbb{F}_{\sf q}$ represents a finite field with order ${\sf q}$. In the rest of the paper, entropies will be in base ${\sf q}$, where ${\sf q}$ represents the field size.

\section{System Model}
\label{sec: system}
Given a training dataset $D=\{(x_i,y_i)\}_{i=1}^{d}$ and let $\boldsymbol{\beta}$ be the model parameters. Given a loss function $\mathcal{L}(\cdot)$, the goal of training is to minimize:
\begin{equation}
\mathcal{L}(D;\boldsymbol{\beta})
= \sum_{i=1}^{d} \mathcal{L}(x_i,y_i;\boldsymbol{\beta}).
\end{equation}

Gradient-based methods update the model parameters using the full gradient $\mathbf{g} = \nabla_{\boldsymbol{\beta}} \mathcal{L}(D;\boldsymbol{\beta})$, which is repeatedly evaluated during training. To enable distributed computation, the dataset $D$ is partitioned into $\Ksf$ disjoint subsets $\{D_1,\ldots,D_{\Ksf}\}$. Then the main task is to compute the sum of gradients $\mathbf{g}=\sum_{k=1}^{\Ksf}\mathbf{g}_k$, where $g_k\in \mathbb{F}_q^\Lsf$ is the  gradient over $D_k$ for $k \in [\Ksf]$. Assume that the gradients are mutually independent.

Then we formulate the $(\sf K, N, N_r, M, S)$ secure gradient coding problem with uncoded groupwise keys and arbitrary heterogeneous data assignment. A user aims to compute the sum of gradients over $\Ksf$ datasets $\{D_1,\ldots,D_{\Ksf}\}$ with the assistance of any ${\sf N_r}$ servers selected from a total of $\Nsf$ servers, while learning no additional information about the individual gradients. 
For each subset $\Vc \subseteq [\Nsf]$ with $|\Vc|=\Ssf$, a random key $Q_{\Vc}$ is generated and shared exclusively among the servers in $\Vc$ using key agreement protocols. These keys are referred to as uncoded groupwise keys. All keys are mutually independent and are also independent of the gradients, 
i.e., 
\begin{equation}
\begin{split}
    &H\left(\left(Q_\Vc:\Vc \in \binom{[\Nsf]}{\Ssf}\right), \left(g_1,...,g_\Ksf\right)\right) \\
    =&\sum_{\Vc \in \binom{[\Nsf]}{\Ssf}}H(Q_\Vc)+\sum_{k \in [\Ksf]}H(g_k).
    \label{con: independent constraints}
\end{split}
\end{equation}

A secure gradient coding scheme consists of three phases: data assignment, encoding, and decoding.

\textit{Data Assignment}:
Each dataset $D_k$ is assigned to a subset of servers $\Dc_k \subseteq [\Nsf]$ with $|\Dc_k| \ge \Msf$, where $\Msf$ denotes the minimum replication factor. For each server $n \in [\Nsf]$,   $\Zc_n \subseteq [\Ksf]$ denotes the set of datasets assigned to it. Assume that $\Zc_1,\ldots,\Zc_{\Nsf}$ are pre-fixed; that is, we consider arbitrary heterogeneous data assignment.

\textit{Encoding}:
Each server $n\in [\Nsf]$ computes the gradients $g_k$ locally for all $k\in \Zc_n$. 
Based on these gradients and the assigned  keys, server $n$ generates a coded message $X_n$, which is then transmitted to the user, i.e., 
\begin{equation}
    H \left( X_n|\left(g_k: k \in \Zc_{n}\right), \left(Q_{\Vc}: n\in \Vc\right) \right) = 0, \forall n \in [{\sf N}].
    \label{con: data and key constraints}
\end{equation}

\textit{Decoding}:
The user collects coded messages from any subset of servers $\Uc \subseteq [\Nsf]$ with $|\Uc|= {\sf N_r}$. Based on the received messages $\mathcal{X} = \{X_n: n \in \Uc\}$, the user must be able to recover the sum of gradients:
\begin{equation}
    H\left(\sum_{k\in [\Ksf]}g_k|\left(X_n:n\in \Uc\right)\right)=0, \forall \Uc \in \binom{\sf [N]}{\sf N_r}.
\end{equation}

To enable cancellation of all keys through linear operations, each key must appear in at least two received messages. Otherwise, a key that appears only once cannot be eliminated. This requirement leads to the following feasibility condition:
\begin{equation}
\label{eq: group size constraint}
    \sf S \ge N-N_r+2.
\end{equation}

In addition to decodability, the scheme must satisfy an information-theoretic security requirement. Specifically, even if the user received transmissions from all servers, it should not be able to obtain any information about the individual gradients beyond their sum. This requirement is formalized as:
\begin{equation}
    I\left(X_{1},\ldots,X_{\sf N};g_{1},\ldots, g_{\sf K} | \sum_{k\in [\Ksf]}g_k \right) = 0.
    \label{con: security constraint}
\end{equation}

  The communication cost is defined as the maximum normalized message size transmitted by any server: 
\begin{equation}
{\sf R} \overset{\Delta}{=} \max_{n \in [\Nsf]} \frac{|X_n|}{\Lsf}.
\end{equation}
The objective is to characterize the optimal communication cost for arbitrary heterogeneous data assignment $\Rsf^*$.

The optimal communication cost of non-secure gradient coding under linear coding was characterized in~\cite{jahani2021optimal}.
\begin{lemma}[\cite{jahani2021optimal}]
\label{lm: optimal communication cost}
For the ${\sf (K, N, N_r, M)}$ non-secure gradient coding model with arbitrary data assignment, where each dataset is assigned to at least $\Msf$ servers, the optimal communication cost under linear coding is equal to
\begin{equation}
\Rsf_{\rm n}^*=\frac{1}{\sf N_r - N + M}.
\end{equation}  
\end{lemma}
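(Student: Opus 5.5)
The claim has two halves: a converse showing that every linear scheme needs ${\sf R}\ge 1/({\sf N_r-N+M})$, and an achievable linear scheme that meets this bound for any heterogeneous assignment in which each $|\Dc_k|\ge\Msf$. In this non-secure setting no keys are involved, so I would simply set $Q_\Vc$ to be trivial.

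\emph{Converse.} The worst case is a dataset $k$ with $|\Dc_k|=\Msf$. Choose a set of stragglers consisting of $\Nsf-\Nsf_r$ servers from $\Dc_k$; this is possible because $\Msf\ge \Nsf-\Nsf_r+1$ is necessary for decodability. The user then hears from exactly ${\sf N_r-N+M}$ servers that hold $D_k$. Now fix all gradients other than $g_k$ (or condition on them). The sum $\sum_j g_j$ still determines $g_k$, which carries $\Lsf$ symbols. Only the messages of servers in $\Dc_k\cap\Uc$ can depend on $g_k$, so
\[
\Lsf \le \sum_{n\in \Dc_k\cap \Uc} |X_n| \le ({\sf N_r-N+M})\,{\sf R}\,\Lsf ,
\]
which gives the bound. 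This argument is information-theoretic, so it holds even without the linearity restriction. I would formalize it with a chain of entropy inequalities, using the independence of the gradients.

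\emph{Achievability.} Let $T={\sf N_r-N+M}$ and split each gradient into $T$ pieces $g_k=(g_{k,1},\dots,g_{k,T})$ of length $\Lsf/T$. Each server sends one linear combination of length $\Lsf/T$. The plan is to reduce to the worst case: for each $k$, keep only exactly $\Msf$ servers of $\Dc_k$, since dropping data can only make the problem harder.

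Next, choose a $\Nsf\times T$ decoding structure based on an MDS-type matrix. Let $\mathbf{B}$ be an $\Nsf\times \Nsf_r$ matrix in which every $\Nsf_r$ rows are invertible, for example a Vandermonde matrix. Server $n$ sends $X_n=\sum_k \mathbf{c}_{n,k}^\Tsf g_k$ with coefficient vectors $\mathbf{c}_{n,k}\in\mathbb{F}_q^T$. We require that for every $\Uc$ there exist decoding coefficients recovering $\sum_k g_k$, i.e.\ $\sum_{n\in\Uc} a_n^{\Uc}\mathbf{c}_{n,k}$ equals the all-ones vector in the piece coordinates for every $k$, and $\mathbf{c}_{n,k}=0$ whenever $n\notin\Dc_k$.

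The standard construction of \cite{jahani2021optimal} builds, for each $k$, the coefficients $\{\mathbf{c}_{n,k}\}_{n\in\Dc_k}$ as $\mathbf{B}(n,\cdot)$ projected onto a space on which the $\Nsf-\Msf$ rows outside $\Dc_k$ vanish. Concretely, solve for a $\Nsf_r\times T$ matrix $\mathbf{W}_k$ with $\mathbf{B}([\Nsf]\setminus\Dc_k,\cdot)\mathbf{W}_k=\mathbf{0}$ and with the sum over the first coordinate fixed. This is feasible because $\Nsf_r-(\Nsf-\Msf)=T$ free dimensions remain. Decoding from any $\Uc$ then inverts $\mathbf{B}(\Uc,\cdot)$ and reads off the $T$ pieces of the sum.

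\emph{Main obstacle.} The main obstacle is verifying this achievability step for arbitrary, non-cyclic assignments. The requirement is that a single set of decoding coefficients per $\Uc$ works simultaneously for all $k$. I would handle this by choosing $\mathbf{B}$ to be a generic or Vandermonde matrix and arguing via the Schwartz–Zippel lemma that the needed determinants are nonzero over a large enough field.
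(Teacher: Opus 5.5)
The paper does not prove this lemma; it quotes it from \cite{jahani2021optimal}, so your argument has to stand on its own.

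\textbf{Converse.} This part is fine. You place all $\Nsf-{\sf N_r}$ stragglers inside a $\Dc_k$ with $|\Dc_k|=\Msf$ and condition on $(g_j)_{j\neq k}$. This gives $\Lsf=H(g_k)\le H(X_{\Uc\cap\Dc_k}\mid (g_j)_{j\neq k})\le({\sf N_r-N+M})\Rsf\Lsf$ for uniform $g_k$, and the bound holds for nonlinear schemes too.

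\textbf{Achievability.} Here there is a genuine gap. The architecture is right: an MDS matrix $\mathbf{B}$ times per-dataset matrices $\mathbf{W}_k$ that are annihilated by the rows outside $\Dc_k$. Two things are off.

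First, your decodability condition is wrong as written. A single vector $a^{\Uc}$ with $\sum_{n\in\Uc}a^{\Uc}_n\mathbf{c}_{n,k}=\mathbf{1}_T$ for all $k$ returns only the one block $\sum_k\sum_j g_{k,j}$, not $\sum_k g_k$. You need $T$ vectors $a^{\Uc,j}$ with $\sum_{n\in\Uc}a^{\Uc,j}_n\mathbf{c}_{n,k}=\mathbf{e}_j$ for every $k$ and every $j\in[T]$.

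Second, your construction of $\mathbf{W}_k$ does not deliver this. The count ${\sf N_r}-(\Nsf-\Msf)=T$ only says that the right null space of $\mathbf{B}(\overline{\Dc_k},\cdot)$ is $T$-dimensional. After inverting $\mathbf{B}(\Uc,\cdot)$, the user holds $\mathbf{Y}=\sum_k\mathbf{W}_k g_k$, with $g_k$ arranged as a $T\times\frac{\Lsf}{T}$ array. It can extract $\sum_k g_k$ only if a single matrix $\mathbf{A}$, independent of $k$, satisfies $\mathbf{A}\mathbf{W}_k=\mathbf{I}_T$ for all $k$. Fixing ``the sum over the first coordinate'' does not impose this. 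Appealing to Schwartz--Zippel without naming the polynomial that must be nonzero leaves unproved exactly the step you flag as the main obstacle.

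\textbf{The missing normalization.} It is the one the paper's own general scheme uses. There $\Fm(\cdot,\Gc_k)=[\mathbf{I}_n;\Fm_2(\cdot,\Gc_k)]$, because the demand block $\Fm_1$ is fixed and only $\Fm_2$ is solved from $\Cm(\overline{\Dc_k},\cdot)\Fm(\cdot,\Gc_k)=\mathbf{0}$, with $\Cm$ in the role of your $\mathbf{B}$. Concretely, impose $\mathbf{W}_k=[\mathbf{I}_T;\mathbf{V}_k]$ for every $k$, with $\mathbf{V}_k\in\mathbb{F}_q^{(\Nsf-\Msf)\times T}$. The zero-pattern condition then becomes
\[
\mathbf{B}(\overline{\Dc_k},[T+1:{\sf N_r}])\mathbf{V}_k=-\mathbf{B}(\overline{\Dc_k},[T]),
\]
which is solvable once $\mathbf{B}(\overline{\Dc_k},[T+1:{\sf N_r}])$ has full row rank.

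Take $\mathbf{B}(n,j)=\theta_n^{j-1}$ with distinct nonzero $\theta_n$. Then this block has entries $\theta_n^{T+j-1}$ for $j\in[\Nsf-\Msf]$, so it is a row-scaled Vandermonde matrix with at most $\Nsf-\Msf$ rows and hence has full row rank. Every $\mathbf{B}(\Uc,\cdot)$ is an invertible Vandermonde matrix.

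Decoding is then to invert $\mathbf{B}(\Uc,\cdot)$ and read the first $T$ rows of $\mathbf{Y}$, which are exactly the $T$ pieces of $\sum_k g_k$. Simultaneity across $k$ is automatic, no random-coding argument is needed, and $q>\Nsf$ suffices. Your reduction to exactly $\Msf$ replicas also becomes unnecessary, since the system stays solvable when $|\overline{\Dc_k}|<\Nsf-\Msf$.
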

Clearly, we have $\Rsf^* \geq \Rsf_{\rm n}^*$. 
It was shown in~\cite{wan2022secure} that without constraints on shared keys, security incurs no additional communication cost, whereas here we impose uncoded groupwise key constraints. Moreover, our scheme reduces to the model in~\cite{wan2022secure} when $\Ssf = \Nsf$, and recovers the cyclic results.

\section{Main Results}
\label{sec: main}
For the $(\sf K, N, N_r, M, S)$ secure gradient coding problem with uncoded groupwise keys and arbitrary heterogeneous data assignment, this paper proposes a universal secure
gradient coding scheme as follows, whose proof could be found in Section~\ref{sec: general scheme}.
\begin{theorem}[Achievable Communication Cost]
\label{thm: communication cost}
For the $(\sf K, N, N_r, M, S)$ secure gradient coding problem with uncoded groupwise keys and arbitrary heterogeneous data assignment, the following communication cost is achievable:
\begin{align}
\label{eq: communication cost}
    \sf R=\frac{\binom{N}{S}-\binom{M}{S}}{\left(\binom{N}{S}-\binom{M}{S}\right)N_r-\binom{N}{S}(N-M)}.
\end{align}
\end{theorem}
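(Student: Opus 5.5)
The plan is to prove Theorem~\ref{thm: communication cost} with an explicit linear scheme over a large field $\mathbb{F}_{\sf q}$. I will split every gradient $g_k$ into $\Lsf/\ell$ sub-pieces and let every key $Q_\Vc$ consist of a suitable number of independent uniform symbols. Each server then sends
\begin{align*}
X_n=\mathbf{E}_n\,\mathbf{g}+\mathbf{K}_n\,\mathbf{Q},\qquad n\in[\Nsf],
\end{align*}
where $\mathbf{g}$ stacks all gradient sub-pieces and $\mathbf{Q}$ stacks all key symbols. The matrices $\mathbf{E}_n$ and $\mathbf{K}_n$ have $\Rsf\Lsf$ rows, and they carry structural zeros: $\mathbf{E}_n$ may be nonzero only on columns of datasets in $\Zc_n$, and $\mathbf{K}_n$ only on columns of keys $Q_\Vc$ with $n\in\Vc$. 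This makes~\eqref{con: data and key constraints} automatic.

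Write $T=\binom{\Nsf}{\Ssf}$ and $A=\binom{\Msf}{\Ssf}$. The target rate can be rewritten as $\Rsf=1/\bigl(\Nsf_r-(\Nsf-\Msf)\,T/(T-A)\bigr)$. This suggests treating keys as additional "virtual datasets" replicated exactly $\Ssf$ times, and running a Jahani-type (\cite{jahani2021optimal}, Lemma~\ref{lm: optimal communication cost}) zero-forcing design on the enlarged collection of real and virtual datasets. The user's demand would then be the gradient sum plus a zero (cancelled) combination of keys.

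\textbf{Step 1: dimension count.} I will fix the dimensions by counting. The user receives $\Nsf_r\Rsf\Lsf$ symbols from any $\Uc$, and these must reveal $\Lsf$ useful symbols. The remaining $(\Nsf_r\Rsf-1)\Lsf$ dimensions must be pure "interference" lying in a common subspace, which the user projects away. For each set $\Uc$, the messages of the $\Nsf-\Nsf_r$ missing servers must be compensated by data redundancy, which is where $\Nsf-\Msf$ enters. Key redundancy enters through the fraction of key groups that can absorb the cancellation. I expect the key groups $\Vc$ lying inside some $\Dc_k$-type $\Msf$-set, of which there are $A$ per such set, to be "wasted": they cannot provide fresh masking. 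This would explain the factor $T/(T-A)$. Solving these balance equations should return exactly~\eqref{eq: communication cost}.

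\textbf{Step 2: decodability.} For each $\Uc\in\binom{[\Nsf]}{\Nsf_r}$ I need a decoding matrix $\mathbf{D}_\Uc$ such that $\mathbf{D}_\Uc[X_n]_{n\in\Uc}=\sum_k g_k$, with all key terms cancelled. I will phrase this as the requirement that a virtual demand vector lie in the row span of $[\mathbf{E}_n,\mathbf{K}_n]_{n\in\Uc}$. I will choose the nonzero entries generically at random. Using the feasibility condition~\eqref{eq: group size constraint}, every key appears at least twice among any $\Nsf_r$ servers, so a rank condition holds for a generic choice. A Schwartz--Zippel argument over a union of $\binom{\Nsf}{\Nsf_r}$ determinant polynomials then gives a single choice that works for all $\Uc$ simultaneously, provided each polynomial is nonzero. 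Nonzeroness I will show by exhibiting one explicit (e.g., cyclic-type) assignment.

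\textbf{Step 3: security.} Following the standard argument for linear schemes, I will show~\eqref{con: security constraint} by checking that the stacked key matrix $[\mathbf{K}_1;\dots;\mathbf{K}_\Nsf]$ has rank $\Nsf\Rsf\Lsf-\Lsf$. Its left null space must be exactly the span of the combinations that decode $\sum_k g_k$. Then all of $(X_1,\dots,X_\Nsf)$ is the sum, plus something uniformly masked by independent keys, and by~\eqref{con: independent constraints} the mutual information vanishes.

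\textbf{Main obstacle.} The hard step is making Steps 2 and 3 hold together under the structural-zero constraints. The key matrix must be rich enough to mask $\Nsf\Rsf\Lsf-\Lsf$ dimensions, yet structured enough that its restriction to every $\Uc$ can still be cancelled. Proving that the generic rank equals the count from Step 1, i.e., that no hidden rank deficiency arises from the $A$ "internal" key groups, is where I would concentrate effort. I would first try a Hall-type or bipartite-matching argument on the support pattern to certify full generic rank.
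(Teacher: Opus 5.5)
Your proposal has a genuine gap in Step~2, and Step~2 is in effect the whole construction. Decodability from $\Uc$ requires the demand to lie in the row span of $[\mathbf{E}_n,\mathbf{K}_n]_{n\in\Uc}$. That is a vanishing-of-minors (Zariski-closed) condition, so it fails for generic nonzero entries unless the parametrization itself forces it, and Schwartz--Zippel cannot deliver it. Take the $(3,3,3,2,2)$ example of Section~\ref{sec: example}. Servers $1$ and $2$ are the only holders of $g_2,g_3$, and together they send $\frac43\Lsf$ generic symbols. Their $(g_2,g_3)$-coefficients therefore span a generic $\frac43\Lsf$-dimensional subspace of $\mathbb{F}_{\sf q}^{2\Lsf}$. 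Decoding needs this subspace to contain the $\Lsf$-dimensional diagonal $\{(b,b)\}$, but generically it meets it in only $\frac13\Lsf$ dimensions.

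The missing idea is the paper's aligned parametrization $X_n=\Cm_n\Fm\Wm$ with a single common $\Fm=[\Fm_1,\mathbf{0};\Fm_2,\mathbf{I}]$. All messages then lie in one $r{\sf N_r}$-dimensional space that contains the demand, and decodability becomes invertibility of $\Cm_\Uc$, which is an open condition. Because every interference direction carries its own key piece, key availability reduces to a zero pattern on $\Cm_Q$. Data availability is \emph{not} imposed as structural zeros on free coefficients. It is enforced afterwards by solving $\Cm_Q(\overline{\Dc_k},\cdot)\Fm_2(\cdot,\Gc_k)=-\Cm_G(\overline{\Dc_k},\cdot)$ for $\Fm_2$, which requires $\Cm_Q(\overline{\Dc_k},\cdot)$ to have full row rank.

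Take $n$ pieces per gradient, $r$ rows per server and $\alpha$ pieces per key. The generic-rank requirements are then the Hall-type conditions $\alpha\bigl(T-\binom{\Nsf-|\Nc|}{\Ssf}\bigr)\ge r|\Nc|$ for $\Nc\subseteq\overline{\Dc_k}$, and $n+\alpha\bigl(T-\binom{\Nsf-|\Nc|}{\Ssf}\bigr)\ge r|\Nc|$ for $\Nc\subseteq\Uc$. These conditions, not the balance heuristic of your Step~1, fix the rate. The equation $r{\sf N_r}=n+\alpha T$ (square $\Cm_\Uc$) together with the tight case $|\Nc|=\Nsf-\Msf$, i.e.\ $\alpha(T-A)=r(\Nsf-\Msf)$, gives~\eqref{eq: communication cost}. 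Monotonicity of $x\mapsto\bigl(T-\binom{\Nsf-x}{\Ssf}\bigr)/x$ then covers smaller $\Nc$. Your intuition that the $A$ keys inside $\Dc_k$ are wasted is right: they are exactly the keys unavailable to $\overline{\Dc_k}$. For arbitrary heterogeneous $\Dc_k$, though, the nonzeroness witness has to come from this matching argument, not from a cyclic-type assignment.

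Step~3 is also wrong as stated whenever ${\sf N_r}<\Nsf$. Suppose the stacked key matrix had rank $\Nsf\Rsf\Lsf-\Lsf$; then its left null space would be $\Lsf$-dimensional. Each decoder $\mathbf{D}_\Uc$ (padded by zeros outside $\Uc$) has rank $\Lsf$ and rows in that space, so all decoders would share one row space. The vectors of that space would have to vanish outside the rows of $\bigcap_{\Uc}\Uc=\emptyset$, which is a contradiction. The correct criterion is $\mathrm{rank}(\mathbf{K})=\mathrm{rank}[\mathbf{E},\mathbf{K}]-\Lsf$. In the paper's scheme the total transmission spans only ${\sf N_r}\Rsf\Lsf$ dimensions, and the keys fill $({\sf N_r}\Rsf-1)\Lsf$ of them. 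Security is then immediate: $\Cm$ has full column rank, so the user learns exactly $\Fm\Wm$, and its lower block $\Fm_2\Wm_G+\Wm_Q$ is one-time padded.
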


The communication cost is determined by how many groupwise keys cannot be canceled using data redundancy. When fewer servers share a key, additional communication is required to eliminate it. Larger key groups increase overlap and reduce this overhead.

We next provide the (order) optimality results of the proposed scheme, whose proof could be found in Appendix~\ref{pr: order optimal}.
\begin{theorem}[(Order) optimality)]
When $\sf S>M$, the  achievable communication cost coincides with the optimal communication cost of   non-secure gradient coding under linear coding,  
\begin{equation}
    \sf \Rsf^* =\Rsf_{\rm n}^*=\frac{1}{N_r-N+M};
\end{equation}
when ${\sf N-N_r+2} \le \sf S \leq M$, the achievable communication cost is order-optimal within a factor of $2$,
\begin{equation}
    \Rsf_{\rm n}^* \le \Rsf \leq 2\Rsf_{\rm n}^*.
\end{equation}
\end{theorem}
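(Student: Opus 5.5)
The plan is to combine the achievable cost of Theorem~\ref{thm: communication cost} with the converse $\Rsf^* \ge \Rsf_{\rm n}^*$ from Lemma~\ref{lm: optimal communication cost}. Since every secure scheme is in particular a non-secure gradient coding scheme, $\Rsf_{\rm n}^* \le \Rsf^* \le \Rsf$ always holds. Both claims therefore reduce to comparing the explicit expression for $\Rsf$ in Theorem~\ref{thm: communication cost} with $1/(\Nsf_r-\Nsf+\Msf)$. To simplify, write $A=\binom{\Nsf}{\Ssf}$ and $B=\binom{\Msf}{\Ssf}$. Then
\begin{equation*}
\Rsf=\frac{A-B}{(A-B)\Nsf_r-A(\Nsf-\Msf)}=\frac{1}{\Nsf_r-\frac{A}{A-B}(\Nsf-\Msf)} .
\end{equation*}

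\textbf{Case $\Ssf>\Msf$.} Here $B=\binom{\Msf}{\Ssf}=0$ by the paper's convention, so $A/(A-B)=1$. The expression collapses to $\Rsf=1/(\Nsf_r-\Nsf+\Msf)=\Rsf_{\rm n}^*$, and sandwiching gives $\Rsf^*=\Rsf_{\rm n}^*$.

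\textbf{Case $\Nsf-\Nsf_r+2\le \Ssf\le\Msf$.} Set $t=A/(A-B)\ge 1$ and $d=\Nsf-\Msf$. We need $\Nsf_r-td\ge \tfrac12(\Nsf_r-d)$, which is equivalent to
\begin{equation*}
\Nsf_r+d\ge 2td, \qquad\text{i.e.}\qquad \frac{\Nsf_r}{d}\ge 2t-1=\frac{A+B}{A-B} .
\end{equation*}
If $d=0$ the claim is trivial. Otherwise the main obstacle is bounding $B/A=\binom{\Msf}{\Ssf}/\binom{\Nsf}{\Ssf}$ from above in terms of $\Nsf_r$ and $d$, using the feasibility constraint $\Ssf\ge \Nsf-\Nsf_r+2$.

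My first attempt uses
\begin{equation*}
\frac{B}{A}=\prod_{i=0}^{\Ssf-1}\frac{\Msf-i}{\Nsf-i}\le \Big(\frac{\Msf}{\Nsf}\Big)^{\Ssf}\le \frac{\Msf}{\Nsf}.
\end{equation*}
Then $(A+B)/(A-B)\le (\Nsf+\Msf)/(\Nsf-\Msf)=(\Nsf+\Msf)/d$. It would then suffice that $\Nsf_r\ge \Nsf+\Msf$, which is false, so this crude bound is too weak.

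The refinement has to use that $\Ssf$ is large. Write $\Nsf_r=\Nsf-\Ssf+2+e$ with $e\ge0$ and keep the full product. Since $\Ssf\le \Msf$ and $\Ssf\ge \Nsf-\Nsf_r+2\ge 2$, the last factor of the product is $(\Msf-\Ssf+1)/(\Nsf-\Ssf+1)$. Bounding $B/A$ by this single factor gives $(A+B)/(A-B)\le (\Nsf+\Msf-2\Ssf+2)/d$. The requirement becomes $\Nsf_r\ge \Nsf+\Msf-2\Ssf+2$, i.e. $\Ssf\ge \Msf-e$. That holds only when $\Ssf$ is near $\Msf$, so a single factor is still not enough in general.

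For the general case I would retain two factors of the product, or equivalently argue that $g(\Ssf)=\Nsf_r-d\binom{\Nsf}{\Ssf}/(\binom{\Nsf}{\Ssf}-\binom{\Msf}{\Ssf})$ is monotone in $\Ssf$ on the feasible range. The reason is that $B/A$ decreases in $\Ssf$, so $t$ decreases and $g$ increases. It then suffices to verify the worst case $\Ssf=\Nsf-\Nsf_r+2$, where $\Nsf_r=\Nsf-\Ssf+2$ and $d=\Nsf-\Msf$, and to check $B/A\le(\Nsf_r-d)/(\Nsf_r+d)=(\Msf-\Ssf+2)/(2\Nsf-\Msf-\Ssf+2)$ directly from the product using $\Ssf\ge2$. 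This last inequality is the step I expect to be hardest, and I would attack it by elementary manipulation of the first two ratios $\Msf/\Nsf$ and $(\Msf-1)/(\Nsf-1)$.
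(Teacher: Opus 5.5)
Your reduction is correct. For fixed $\Nsf,\Nsf_r,\Msf$ the denominator $\Nsf_r-td$ increases with $\Ssf$, so it suffices to treat $\Ssf=\Nsf-\Nsf_r+2$, and the factor-$2$ claim becomes
\[
\frac{B}{A}=\prod_{i=0}^{\Ssf-1}\frac{\Msf-i}{\Nsf-i}\le\frac{\Msf-\Ssf+2}{2\Nsf-\Msf-\Ssf+2}\qquad\text{for all } 2\le\Ssf\le\Msf\le\Nsf .
\]
This inequality is the whole content of the second claim, and you leave it unproved. Moreover, the attack you name would fail. Keeping the first two factors gives the $\Ssf$-independent bound $\Msf(\Msf-1)/(\Nsf(\Nsf-1))$, while the right-hand side decreases in $\Ssf$. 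For $\Nsf=10$, $\Msf=9$, $\Ssf=4$ this bound is $0.8>7/9$, even though the true value $B/A=0.6$ satisfies the target. The first factors are the ones closest to $1$, so they are the wrong ones to retain.

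The step goes through if you keep the \emph{last} two factors instead. This is allowed since $\Ssf\ge 2$ and every other factor is at most $1$. With $a=\Msf-\Ssf+1\ge 1$ and $b=\Nsf-\Ssf+1\ge a$,
\[
\frac{B}{A}\le\frac{a(a+1)}{b(b+1)}\le\frac{a+1}{2b-a+1}=\frac{\Msf-\Ssf+2}{2\Nsf-\Msf-\Ssf+2},
\]
where the middle inequality is equivalent to $(b-a)^2+(b-a)\ge 0$. With that, your proof is complete, and it is more careful than the paper's. The paper only observes that $\beta$ (your $t$) is largest at $\Ssf=2$, which by feasibility forces $\Nsf_r=\Nsf$, and evaluates the ratio at that single point to obtain $(\Nsf+\Msf)/\Nsf\le 2$. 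For fixed $\beta$, however, the ratio $(\Nsf_r-d)/(\Nsf_r-\beta d)$ grows as $\Nsf_r$ decreases. So the cases $\Nsf_r<\Nsf$ with $\Ssf=\Nsf-\Nsf_r+2$ are not covered by that observation. Your monotonicity-in-$\Ssf$ reduction followed by the two-factor bound handles exactly this trade-off.
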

Note that when $\rm S \le M$, limited key sharing prevents full cancellation through data redundancy, leading to extra communication. However, this overhead is bounded, and the communication cost remains within a constant factor of the optimum.
It then reveals a sharp transition at $\sf S = M$: above this threshold, security is essentially free, while below it, security incurs an order-optimal communication overhead.

\paragraph*{Numerical evaluation}
We then compare our communication cost with the optimal cost of non-secure gradient coding in~\cite{jahani2021optimal}. 
Fig.~\ref{fig: Numerical analysis M} shows that the communication cost decreases as $\Msf$ increases. When $\Ssf > \Msf$, our scheme matches the optimal cost, confirming that sufficient key overlap eliminates extra overhead.
Fig.~\ref{fig: Numerical analysis S} illustrates the effect of $\Ssf$. With small $\Ssf$, the cost is higher due to limited key sharing, but it quickly drops as $\Ssf$ grows. When $\Ssf \geq \Msf$, the cost achieves the optimum.
Overall, the numerical results demonstrate that our scheme remains efficient across parameter settings and achieves near-optimal performance. 

\begin{figure}
  \centering
  \includegraphics[width=0.31\textwidth]{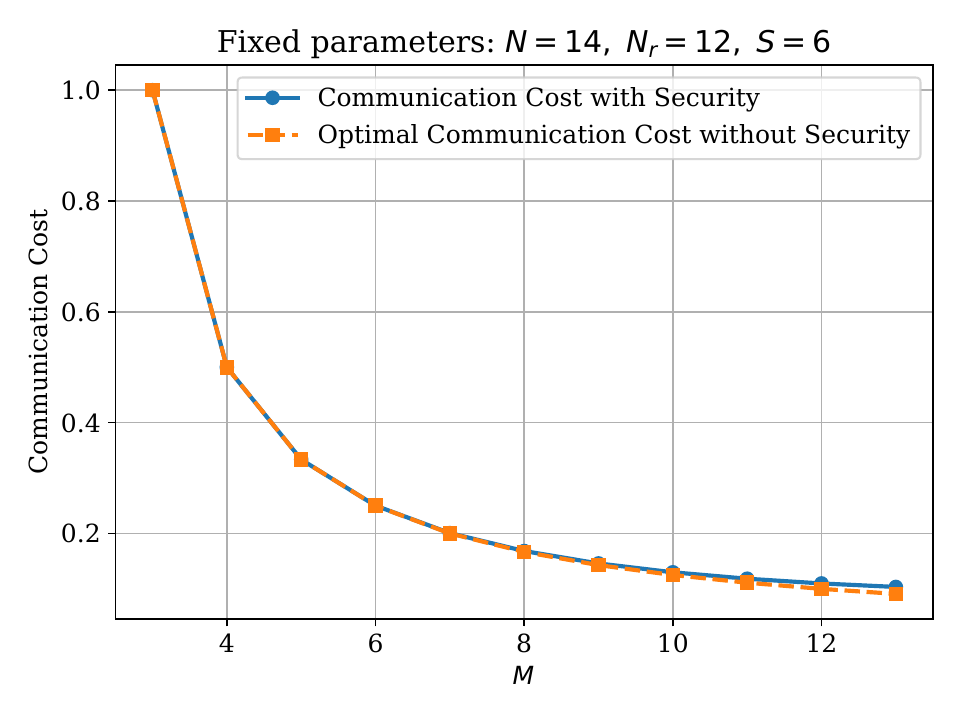}
  \vspace{-.4cm}
  \caption{$\Nsf=14, {\sf N_r}=12, \Ssf=6$, with varying $\Msf\in [3:13]$.}
  \label{fig: Numerical analysis M}
\end{figure}

\begin{figure}
  \centering
  \includegraphics[width=0.32\textwidth]{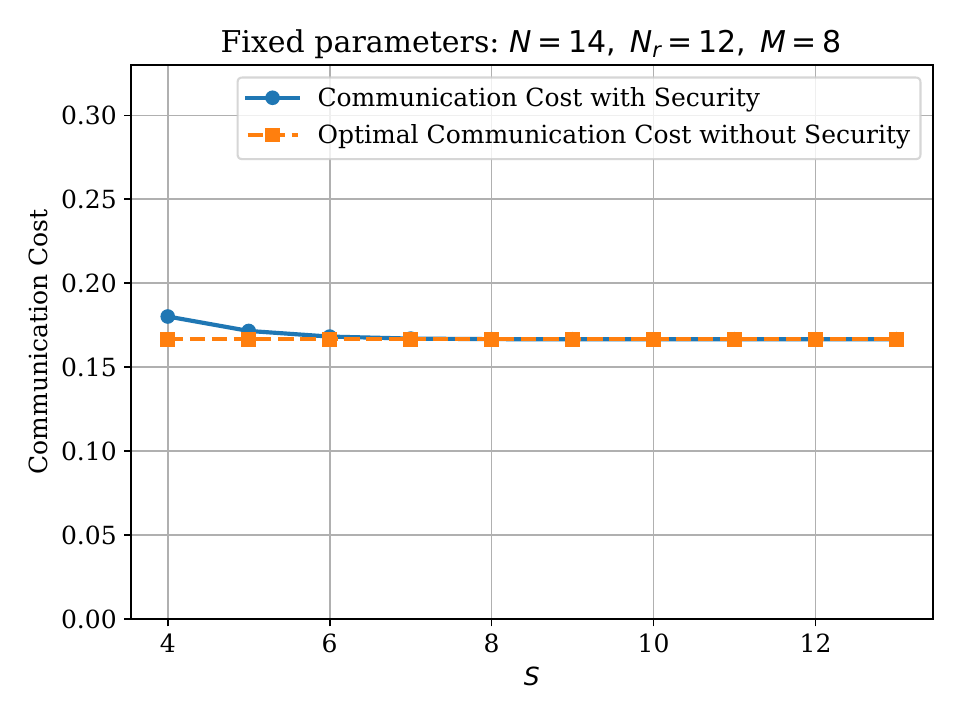}
  \vspace{-.4cm}
  \caption{$\Nsf=14, \Nsf_{\rm r}=12, \Msf=8$, with varying $\Ssf\in [4:13]$.}
  \label{fig: Numerical analysis S}
\end{figure}

\section{Proof of Theorem~\ref{thm: communication cost}: Achievable scheme}
\label{sec: general scheme}

\subsection{Example: ${\sf (K, N, N_r, M, S)} = (3, 3, 3, 2, 2)$}
\label{sec: example}
We first illustrate the proposed scheme through an example, where ${\sf (K, N, N_r, M, S)} = (3, 3, 3, 2, 2)$. 

\paragraph*{Data assignment}
Each dataset $D_i$ is assigned to $2$ distinct servers. Specifically, in this example, $\Dc_1=\{2,3\}, \Dc_2=\{1,2\}, \Dc_3=\{1,2\}$.
For each $\Vc\in \binom{[3]}{2}$, the servers in $\Vc$ share a key $Q_\Vc$ with $\frac{\Lsf}{3}$ uniformly i.i.d. symbols on $\mathbb{F}_q$. In this example, for the ease of illustration, we   assume that $q$ is a large prime number, which  is not required in our general scheme.
The assignment of datasets and keys is shown as Fig.~\ref{fig: example1 data assignment}. 
\begin{figure} 
  \centering  
  \includegraphics[width=0.3\textwidth]{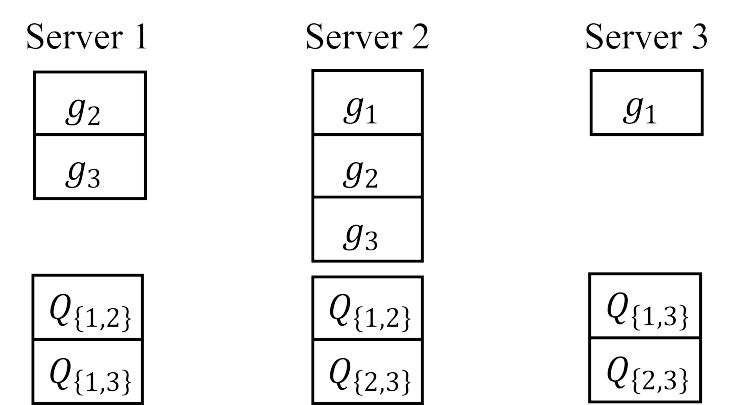}  
  \vspace{-.2cm}
  \caption{Data and key assignment for ${\sf (K, N, N_r, M, S)} = (3, 3, 3, 2, 2)$.}  
  \label{fig: example1 data assignment}  
\end{figure}  

\paragraph*{Encoding}
To achieve our communication cost ${\sf R}=2/3$ in Theorem~\ref{thm: communication cost}, each gradient $g_k$ is partitioned into $3$ non-overlapping and equal-length pieces $g_k = (g_{k,1}, g_{k,2}, g_{k,3})$, each piece containing $\Lsf/3$  symbols. Each server then transmits $2$ independent linear combinations of  pieces and keys to the user. After the user receives $6$ transmissions from three servers, the user can recover $\Fm\Wm$. More precisely,   $\Fm$ is a matrix with dimension $6\times12$,  
\begin{equation}  
  \Fm =  
  \begin{bmatrix}  
  \Fm_1 & {\bf 0}_{3 \times 3} \\  
  \Fm_2 & \Fm_3
  \end{bmatrix},  
\end{equation}
where $\Fm_1$ specifies the computation demand
  \begin{align}  
      \Fm_1 = \left[\begin{array}{ccccccccc}  
           111&000&000  \\  
           000&111&000  \\  
           000&000&111  
      \end{array}\right]_{3 \times 9},  
  \end{align}  
and we set $\Fm_3={\bf I}_3$ while $\Fm_2$ will be designed later. 
 $\Wm$ is  a matrix  composed of gradient pieces and keys given by
\begin{equation}  
    \Wm = [g_{1,1}; g_{2,1}; g_{3,1}; g_{1,2}; \dots; g_{3,3}; Q_{1,2}; Q_{1,3}; Q_{2,3}].  
\end{equation}  

Each server $n\in[3]$ transmits a coded message
$X_n = \Cm_n\Fm\Wm,$ 
where $\Cm_n$ is an encoding matrix with dimension $2\times6$ to be determined. 

\paragraph*{Decoding}
After receiving transmissions $\mathcal{X} = \{X_n: n\in[3]\}$ from servers in $\Uc=\{1,2,3\}$, the user can decode the received messages as:
\begin{equation}
  \Cm_\Uc^{-1}\mathcal{X} = 
  \begin{bmatrix}
  \Fm_1 & \mathbf{0} \\
  \Fm_2 & \mathbf{I}_3
  \end{bmatrix}
  \Wm,
\end{equation}
where $\Cm_\Uc=[\Cm_1;\Cm_2;\Cm_3]$ and the sum of gradients $g_1+g_2+g_3$ is recovered from the first three rows of $\Fm\Wm$.

To guarantee the successful transmission at each server, the correct recovery of $\Fm\Wm$, and the information-theoretic security of the individual gradients, we impose some design constraints on $(\Cm_n:n\in[3])$ and $\Fm_2$, which we describe next.

{\it \textbf{Encodability constraint}}. To ensure that each server  $n\in[\Nsf]$ only transmits the gradients and keys assigned to it, we  have
\begin{equation}
\label{eq: encoding constraint}
    \Cm_n\Fm(\cdot, \overline{\Cc_n})=\mathbf{0}_{2\times |\overline{\Cc_n}|},
\end{equation}
where $r$ denotes the number of linear combinations of gradient pieces and keys transmitted by each server (in  this example $r=2$), and $\overline{\Cc_n}$ denotes the set of columns in $\Fm$ corresponding to gradients (or gradient pieces) and keys not assigned to server $n$. This condition guarantees that in the message $X_n$ sent by server $n\in[\Nsf]$, the coefficients corresponding to the unassigned gradients  (or gradient pieces)  and keys are $0$.

{\it \textbf{Decodability constraint}}. The user should be able to recover $\Fm\Wm$ from the transmissions of any $\sf N_r$ servers. Accordingly, for any $\Uc\in\binom{\sf [N]}{\sf N_r}$, the matrix
\begin{equation}
\label{eq: decoding constraint}
    \Cm_\Uc = \begin{bmatrix}
        \Cm_{\Uc(1)} \\
        \vdots \\
        \Cm_{\Uc(\sf N_r)}
    \end{bmatrix}
\end{equation}
should be full rank.

Now it remains to design $\Cm$ and $\Fm_2$ satisfying the encodability  and decodability constraints. In the following, we will construct $\Cm$ to satisfy the encodability constraint of key part and decodability constraint, and satisfy the encodability constraint of gradient part by the design of $\Fm_2$.

\subsubsection{Design Matrix $\Cm$}
For server $1$, who has the keys $Q_{1,2}$ and $Q_{1,3}$  without knowing $Q_{2,3}$, its transmitted coefficients associated with unassigned keys $Q_{2,3}$ should be $0$, which imposes the following constraint:
\begin{equation}
    \Cm_1\Fm(\cdot,\{12\})= 
    \Cm_1
    \begin{bmatrix}
    0, 0, 0, 0, 0, 1  
    \end{bmatrix}^{\sf T}
    = \mathbf{0}_{2 \times 1}.
\end{equation}

This constraint can be satisfied by generating $\Cm_1$ from the left nullspace of $\Fm(\cdot,\{12\})$. Applying the same principle to the remaining servers, we complete the construction of $\Cm$, which is full rank and thus satisfies the decodability constraint. 
\begin{align*}
  \Cm=
  \begin{bmatrix}
      * & * & * & * & * & 0 \\
      * & * & * & * & * & 0 \\
      * & * & * & * & 0 & * \\
      * & * & * & * & 0 & * \\
      * & * & * & 0 & * & * \\
      * & * & * & 0 & * & * 
  \end{bmatrix}=
  \begin{bmatrix}
      1 & 2 & 3 & 1 & 0 & 0 \\
      2 & 3 & 4 & 0 & 1 & 0 \\
      3 & 2 & 5 & 1 & 0 & 0 \\
      8 & 5 & 7 & 0 & 0 & 1 \\
      9 & 7 & 2 & 0 & 1 & 0 \\
      7 & 4 & 1 & 0 & 0 & 1
  \end{bmatrix},
\end{align*}
where `*' represents an i.i.d. uniform element on $\mathbb{F}_{q}$. 

\subsubsection{Design Matrix $\Fm_2$}
For gradient $g_1$, which is stored only at servers $2$ and $3$, server $1$ must not transmit any piece of $g_1$. This requires
\begin{align}
    \Cm_1\Fm(\cdot,\Gc_1)=\begin{bmatrix}
  1 & 2 & 3 & 1 & 0 & 0 \\
  2 & 3 & 4 & 0 & 1 & 0 
  \end{bmatrix}
  \begin{bmatrix}
  1 & 0 & 0 \\
  0 & 1 & 0 \\
  0 & 0 & 1 \\
  * & * & * \\
  * & * & * \\
  * & * & *
  \end{bmatrix}
  = \mathbf{0}_{2 \times 3},
\end{align}
where $\Gc_1=\{1+3(i-1):i\in[3]\}$ denotes the set of columns in $\Fm$ corresponding to gradient $g_1$.  
Since $\Cm_1(\cdot,\{4,5,6\})$ is left full rank, this system is solvable. By applying the same procedure to the other gradients, we obtain the complete construction of $\Fm_2$.

At this point, both $\Cm$ and $\Fm_2$ have been fully specified and satisfy the constraints.

{\it Security:} 
We provide an intuitive explanation of this security constraint. The overall information obtained by the user, $\Fm\Wm$, can be horizontally partitioned into two blocks. The first block $[\Fm_1,\mathbf{0}]\Wm$ is  the desired sum of gradients $\Fm_1[g_{1,1};\ldots;g_{3,3}]$. Hence, the security is determined by the second block 
\begin{equation}
    [\Fm_2,\Fm_3]\Wm =
    \left[\begin{array}{cccccc}  
      * & \cdots & * & 1 & 0 & 0 \\  
      * & \cdots & * & 0 & 1 & 0 \\  
      * & \cdots & * & 0 & 0 & 1 
  \end{array}\right]
  \left[\begin{array}{c}  
      g_{1,1} \\  
      \vdots \\  
      g_{3,3} \\
      Q_{1,2} \\  
      Q_{1,3} \\  
      Q_{2,3} 
  \end{array}\right],
\end{equation}
where $\Fm_2$ multiplies individual gradients and must be protected, while $\Fm_3$ multiplies the keys. If $\Fm_3$ is full rank, no linear transformation can eliminate any of its rows. As a result, each row of $\Fm_2$ is always masked by at least one independent key. As $\Fm_3$ is designed as an identity matrix, the security constraint is satisfied.

\subsection{General scheme}
We are now ready to describe our general scheme for the ${\sf (K, N, N_r, M, S)}$ secure gradient coding model.  It has been proved in~\cite{zhao2022information} that, by assuming that $L$ is large
enough, 
 we can directly set that $q$ is large enough.
 Denote the communication cost in~\eqref{eq: communication cost} as $\Rsf=r/n$ for simplicity.

\emph{Data assignment:}
The datasets are assigned to servers in an arbitrary way. Each dataset $D_k$ is replicated at least ${\sf M}$ times among the servers in the set $\mathcal{D}_k$. The keys are generated in an uncoded groupwise manner. Each group $\mathcal{V}\in \sf \binom{[N]}{S}$ shares a unique key $Q_\mathcal{V}$. To simplify notation, we replace the key notation from set $Q_\mathcal{V}$ to index $Q_i$, where $\mathcal{V}$ is the $i$-th set in $\binom{[\Nsf]}{\Ssf}$. Each key has $\Lsf (\Nsf-\Msf)/n$ i.i.d. uniform symbols in $\mathbb{F}_{q}$. 

\emph{Encoding:}
To achieve our communication cost ${\sf R}=r/n$, each gradient $g_k$ is partitioned into $n$ non-overlapping and equal-length pieces $g_k = (g_{k,1},\ldots, g_{k,n})$. each $Q_i$ is further partitioned into $\Nsf-\Msf$ non-overlapping and equal-length pieces $Q_i = (Q_{i,1},\ldots, Q_{i,\Nsf-\Msf})$. So each gradient piece and  each key piece  both contain $\Lsf /n$ i.i.d. uniform symbols in $\mathbb{F}_{q}$.
Each server then transmits $r$ independent linear combinations of gradient and key pieces to the user. After the user receives $r{\sf N_r}$ transmissions from any ${\sf N_r}$ servers, the user can recover $\Fm\Wm$. The demand matrix $\Fm$ is defined as:
\begin{equation}  
  \Fm =  
  \begin{bmatrix}  
  \Fm_1 & {\bf 0}_{n \times \sf(N-M)\binom{N}{S}} \\  
  \Fm_2 & \Fm_3
  \end{bmatrix},  
\end{equation}
where $\Fm_1$ specifies the computation demand
\begin{equation}
    \begin{bmatrix}
        {\bf 1}_\Ksf & \cdots & {\bf 0}_\Ksf  \\
        \vdots & \ddots & \vdots \\
        {\bf 0}_\Ksf &\cdots & {\bf 1}_\Ksf
    \end{bmatrix}_{n\times n\Ksf},
\end{equation}
and we set $\Fm_3={\bf I}_{\sf(N-M)\binom{N}{S}}$ while $\Fm_2$ will be designed later.

The message matrix $\Wm$ is given by
\begin{equation}  
    \Wm = [g_{1,1}; g_{2,1};...;g_{\Ksf,n};Q_{1,1};Q_{2,1};...;Q_{\binom{\Nsf}{\Ssf},\Nsf-\Msf}].  
\end{equation}  

Each server $n\in[\Nsf]$ transmits a coded message
\begin{equation}
    X_n = \Cm_n\Fm\Wm,
\end{equation}
where $\Cm_n$ is an encoding matrix with dimension $r\times r{\sf N_r}$. 

\emph{Decoding:}
After receiving  $\mathcal{X} = \{X_n:n\in \Uc\}$ from servers in $\Uc\in\binom{[\Nsf]}{\sf N_r}$, the user can decode the received messages as:
\begin{equation}
  \Cm_\Uc^{-1}\mathcal{X} = 
  \begin{bmatrix}
  \Fm_1 & \mathbf{0} \\
  \Fm_2 & \mathbf{I}
  \end{bmatrix}
  \Wm,
\end{equation}
from which the sum of gradients $g_1+\ldots+g_\Ksf$ is recovered from the first $n$ rows of $\Fm\Wm$.

To ensure the correctness of the proposed scheme, the encodability and decodability constraints introduced in Section~\ref{sec: example} should be satisfied, by designing $\Cm$ and $\Fm_2$. 

For each server $n\in [\Nsf]$, let $\Sc_n \subseteq \binom{\Nsf}{\Ssf}$ denote the set of indices of keys available to server $n$, and let $\overline{\Sc_n}$ denote its complement. In the transmitted message $X_n$, the coefficients corresponding to unassigned keys $\overline{\Sc_n}$ are zero; thus
\begin{equation}
  \Cm_{n}\Fm(\cdot, \overline{\Hc_{n}}) = \mathbf{0},
\end{equation}
where $\overline{\Hc_{n}}=\{n\Ksf+i+(j-1)\binom{\Nsf}{\Ssf}: i \in \overline{\Sc_{n}}, j \in [{\sf N-M}]\}$ indexes the columns of $\Fm$ corresponding to keys unavailable to server $n$. 
We construct $\Cm_{n}$ by selecting its rows from the left nullspace of $\Fm(\cdot, \overline{\Hc_{n}})$. Applying this procedure to all servers yields the complete encoding matrix $\Cm$, which is full rank and thus satisfies the decodability constraint.\footnote{\label{foot: full rank}$\Cm_n$ is generated from the nullspace of $\Fm(\cdot, \overline{\Hc_{n}})$, whose available dimension is proportional to the keys available to server $n$. For any servers in $\Uc\in\binom{[\Nsf]}{\sf N_r}$, all keys are available, ensuring $\Cm_\Uc$ contains sufficient independent vectors. Consequently, $\Cm_\Uc$ is full rank with high probability. The detailed decodability proof is provided in Appendix~\ref{pr: decodability}.}

After determining $\Cm$, we determine $\Fm_2$. 
For each gradient $g_k$, which is known by the servers in $\Dc_k$, the  servers in $\overline{\Dc_k}$ cannot transmit the corresponding gradient pieces; thus
\begin{align}
    \Cm(\overline{\Dc_k},\cdot)\Fm(\cdot,\Gc_k) = \mathbf{0},
\end{align}
where $\Gc_k=\{k+(i-1)\Ksf: i\in[n]\}$ indexes the columns of $\Fm$ corresponding to $g_k$. 
These constraints reduce to affine linear systems in $\Fm_2$. Since $\Cm(\overline{\Dc_k},\Oc)$ is left full rank,\footnote{\label{foot: equation}This property is ensured by our communication cost design, which is derived from the requirement that $\Cm(\overline{\Dc_k},\Oc)$ is left full rank. The detailed proof is in Appendix~\ref{pr: encodability}.} where $\Oc$ is the set of columns in $\Cm$ multiplied by $\Fm_2$, these constraints are solvable. Solving these systems for all $k\in[\Ksf]$ completes the construction of $\Fm$.

At this point, both $\Cm$ and $\Fm_2$ have been fully specified and satisfy both constraints.

{\bf Proof of security.} We provide an intuitive proof on the security in the following, where the formal information theoretic proof is provided in Appendix~\ref{pr: security}. 
The overall information obtained by the user, $\Fm\Wm$, can be horizontally partitioned into two blocks as in the above example. The first block $[\Fm_1, \mathbf{0}]\Wm$ is the computation task. The security is mainly concentrated on second block $[\Fm_2,\Fm_3]\Wm$, where $\Fm_2$ contains the information of individual gradients and must be protected. Since $\Fm_3$ is an identity matrix, no linear transformation can eliminate any of its rows. Hence, each row of $\Fm_2$ is always protected by at least one independent key, which satisfies the security constraint.

{\bf Communication cost.}
\label{communication cost}
We derive the communication cost from the encodability constraint of the gradient part, which requires $\Cm(\overline{\Dc_k},\Oc)$ to be full-rank. 
Assume $\Rsf=\frac{r}{n}$ and each key is partitioned into $\alpha$ pieces. The dimension of the available subspace for $\Cm(\overline{\Dc_k},\Oc)$ is $\alpha\left({\sf \binom{N}{S}-\binom{M}{S}}\right)$,\footnote{The detailed derivation is provided in Appendix~\ref{pr: communication cost}.} 
which must be no smaller than the number of rows, yielding the constraint $\alpha\left({\sf \binom{N}{S}-\binom{M}{S}}\right) \ge (\Nsf-\Msf)r$. 
The coding matrix $\Cm$ has dimension $r\Nsf \times r{\sf N_r}$, and the demand matrix $\Fm$ has dimension $\left(n+\alpha\binom{\Nsf}{\Ssf}\right) \times \left(n\Ksf+\alpha\binom{\Nsf}{\Ssf}\right)$. For the product $\Cm\Fm$ to be well defined, we have $r{\sf N_r}=n+\alpha\binom{\Nsf}{\Ssf}$. Finally, eliminating $\alpha$, we obtain the communication cost in Theorem~\ref{thm: communication cost}.

\section{Conclusion}
\label{sec: conclusion}
In this paper, we proposed a novel secure gradient coding model with uncoded groupwise keys. Under arbitrary data assignment, we present a general communication cost together with an achievable scheme. The communication cost is shown  to be optimal when $S>M$ and order-optimal within a factor of 2 of the optimum when $S \le M$, where $S$ and $M$ denote the key group size and minimum data replication factor, respectively.

\section*{Acknowledgment}
The work of X. You, K. Wan was funded by NSFC-12141107 and Wuhan “Chen Guang” Pragram under Grant 2024040801020211. The work of X. Zhang and G. Caire was supported by the Gottfried Wilhelm Leibniz-Preis 2021 of the German Science Foundation (DFG).

\clearpage

\bibliographystyle{IEEEtran}
\bibliography{ref}
\clearpage

\appendices
\section{Order-Optimal Proof}
\label{pr: order optimal}
We give the order-optimal proof of our scheme.

When $\sf S>M$, we have ${\sf \binom{M}{S}}=0$. Substituting this into \eqref{eq: communication cost} shows that the achievable communication cost reduces exactly to the optimal expression in Lemma~\ref{lm: optimal communication cost}.

Let us then consider the case where  $\rm N-N_r+2 \le \sf S\leq M$.
We compare the achievable communication cost with the non-secure optimum by examining their ratio:
\begin{equation}
    \sf \frac{R}{\Rsf_{\rm n}^*} = \frac{N_r-(N-M)}{N_r-\beta(N-M)},
\end{equation}
where $\beta=\sf \frac{\binom{N}{S}}{\binom{N}{S}-\binom{M}{S}}$.

From the above expression, $\sf \frac{R}{\Rsf_{\rm n}^*}$ is maximized when $\beta$ is maximized. We therefore analyze $\beta$, which can be rewritten as:
\begin{equation}
    \beta = \sf \frac{1}{1-\frac{M(M-1)\ldots(M-S+1)}{N(N-1)\ldots(N-S+1)}}.
\end{equation}

Notice that $1>\sf \frac{M}{N}>\frac{M-1}{N-1}>...>\frac{M-S+1}{N-S+1}$, it can be verified that $\beta$ attains its maximum when $\sf S=2$. Under the feasibility condition $\sf S\ge N-N_r+2$, this implies $\sf N=N_r$. Substituting these conditions into the ratio yields:
\begin{equation}
    \frac{\Rsf}{\Rsf_{\rm n}^*} \le {\sf \frac{N+M}{N}}\le 2,
\end{equation}
which completes the proof.

\section{Encodability and Decodability Proof}
\label{pr: encodability}
\label{pr: decodability}

In this appendix, we prove that there exists a coding matrix $\Cm$ satisfying both the encodability and decodability constraints introduced in Section~\ref{sec: general scheme}. The proof proceeds in two steps. First, we derive feasibility conditions that ensure sufficient independent vectors are available to construct the required submatrices. Second, we prove existence by showing that the corresponding determinant polynomials are jointly nonzero over a sufficiently large finite field.

Assume each key is partitioned into $\alpha$ pieces, and recall that $\Fm_3={\bf I}_{\alpha \sf \binom{N}{S}}$. Decompose the coding matrix $\Cm$ as $\Cm=[\Cm_G,\Cm_Q]$, where $\Cm_G \in \mathbb{F}_q^{r\Nsf \times n}$ corresponds to gradient rows and is unconstrained, and $\Cm_Q \in \mathbb{F}_q^{r\Nsf \times \alpha \sf \binom{N}{S}}$ corresponds to key rows and contains structured zeros imposed by encodability. We group rows server-wise as $\Cm=[\Cm_1^T,\ldots,\Cm_\Nsf^T]^T$, where each $\Cm_i$ has $r$ rows.
\begin{enumerate}
    \item \textbf{Encodability Constraint:} The matrix $\Cm_Q(\overline{\Dc_i}, \cdot)$ must be full rank. This condition guarantees that the corresponding linear system is solvable, ensuring that the elements of $\Fm_2$ can be properly constructed.
    \item \textbf{Decodability Constraint:} For every $\Uc \subseteq [\Nsf]$ with $|\Uc|=\sf N_r$, the matrix $\Cm_\Uc$ must be full rank. This ensures that, upon receiving transmissions from any $\sf N_r$ servers, the user can recover the desired quantity $\Fm\Wm$ by inverting (or eliminating) $\Cm_\Uc$.
\end{enumerate}

\subsection{Feasibility}
Let $\Sc_n$ denotes the set of keys available to server $n$, and define $\Sc_\Nc=\cup_{n\in \Nc} \Sc_n$ as the set of keys available to at least one server in $\Nc$. Let $\Omega_\Nc=|\Sc_\Nc|$ denote the number of keys available to at least one server in $\Nc$. We first characterize the number of available independent vectors in $\Cm_Q(\Nc, \cdot)$.

\begin{lemma}
\label{lm: dimension}
For any server set $\Nc \subseteq[\Nsf]$, the number of available independent vectors to $\Cm_Q(\Nc,\cdot)$ is $\alpha \Omega_\Nc$.
\end{lemma}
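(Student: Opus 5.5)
The plan is to make precise what ``available independent vectors'' means and then compute it as a dimension. A row of $\Cm_Q(\Nc,\cdot)$ belongs to the message of some server $n\in\Nc$. By the encodability constraint on the key part, this row is constrained to lie in the left nullspace of $\Fm(\cdot,\overline{\Hc_n})$. Since $\Fm_3=\mathbf{I}$ and the key columns of $\Fm$ are zero in the $\Fm_1$ block, the key-part coefficients of a row of $\Cm_n$ are exactly its entries in the positions of the key-piece rows of $\Fm$. Hence the constraint $\Cm_n\Fm(\cdot,\overline{\Hc_n})=\mathbf{0}$ is equivalent to requiring that the key-part of each row of $\Cm_n$ vanish on the $\alpha|\overline{\Sc_n}|$ coordinates indexing pieces of keys unavailable to $n$. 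The gradient-part coordinates, which are unconstrained, play no role in $\Cm_Q$.

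The first step is therefore to identify the admissible row space for $\Cm_Q(\{n\},\cdot)$. It is the coordinate subspace
\[
\Lambda_n=\mathrm{span}\{\mathbf{e}_{i,j}: i\in\Sc_n,\ j\in[\alpha]\}\subseteq\mathbb{F}_q^{\alpha\binom{\Nsf}{\Ssf}},
\]
where $\mathbf{e}_{i,j}$ is the unit vector for key piece $Q_{i,j}$. This subspace has dimension $\alpha|\Sc_n|$.

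The second step handles a set $\Nc$. The row space of $\Cm_Q(\Nc,\cdot)$ is contained in $\sum_{n\in\Nc}\Lambda_n$. A sum of coordinate subspaces is the coordinate subspace spanned by the union of their index sets, namely $\{(i,j): i\in\Sc_\Nc,\ j\in[\alpha]\}$. Its dimension is therefore $\alpha|\Sc_\Nc|=\alpha\Omega_\Nc$, which gives the upper bound: no choice of $\Cm$ satisfying encodability makes the rank of $\Cm_Q(\Nc,\cdot)$ exceed $\alpha\Omega_\Nc$.

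The third step shows this bound is attainable, so that $\alpha\Omega_\Nc$ is exactly the number of available independent vectors. Each $\mathbf{e}_{i,j}$ with $i\in\Sc_\Nc$ lies in some $\Lambda_n$ with $n\in\Nc$, so the whole span is reachable by rows of servers in $\Nc$, provided those servers have enough rows ($r|\Nc|\ge\alpha\Omega_\Nc$, or else the count is $\min(r|\Nc|,\alpha\Omega_\Nc)$). With the rows of each $\Cm_n$ drawn i.i.d. uniformly from $\Lambda_n$, the rank equals this maximum with high probability for large $q$. This follows by exhibiting one deterministic assignment achieving it and applying the Schwartz--Zippel lemma to a nonzero minor.

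The main obstacle is interpretive rather than technical. The lemma is really a statement about the dimension of the ambient space $\sum_{n\in\Nc}\Lambda_n$, and the write-up must state clearly that it counts available directions, not the realized rank. That realized rank is deferred to the determinant argument in the existence step that follows.
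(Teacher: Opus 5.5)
Your argument is correct and is essentially the paper's proof: both use $\Fm_3=\mathbf{I}$ to turn the key-part encodability constraint into a support restriction, then count $\alpha$ pieces for each key in $\Sc_\Nc$. You impose the constraint server by server and take $\sum_{n\in\Nc}\Lambda_n$, while the paper imposes only the aggregate constraint on pieces unavailable to every server in $\Nc$; both give the same subspace. Your third step goes beyond what the paper proves, and if you keep it, ``enough rows'' really means a Hall-type condition over all $\Nc'\subseteq\Nc$, not just $r|\Nc|\ge\alpha\Omega_\Nc$. It does hold here: the generic rank is $\min_{\Nc'\subseteq\Nc}\bigl(r|\Nc\setminus\Nc'|+\alpha\Omega_{\Nc'}\bigr)$, and $\Omega_{\Nc'}=\binom{\Nsf}{\Ssf}-\binom{\Nsf-|\Nc'|}{\Ssf}$ is concave in $|\Nc'|$ (its increments $\binom{\Nsf-|\Nc'|-1}{\Ssf-1}$ are nonincreasing), so the minimum is attained at $\Nc'=\emptyset$ or $\Nc'=\Nc$ and equals $\min(r|\Nc|,\alpha\Omega_\Nc)$.
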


{\it Proof}.
Since $\Fm_3$ is an identity matrix, each key piece corresponds to a distinct orthogonal basis vector. For servers in $\mathcal{N}$, the encodability constraint imposes $\Cm_Q(\mathcal{N}, \cdot) F_3(\cdot, \overline{H_{\mathcal{N}}}) = 0$, where $\overline{H_{\mathcal{N}}}$ indexes key pieces unavailable to all servers in $\mathcal{N}$. Hence, the rows of $\Cm_Q(\mathcal{N}, \cdot)$ must lie in the left nullspace of $\Fm_3(\cdot, \overline{H_{\mathcal{N}}})$. Because $\Fm_3$ is identity, removing columns indexed by $\overline{H_{\mathcal{N}}}$ removes the corresponding basis vectors, and the remaining basis vectors correspond exactly to keys in $S_{\mathcal{N}}$. Each such key contributes $\alpha$ independent directions. Therefore, the available dimension is $\alpha \Omega_{\mathcal{N}}$.\hfill$\square$

We now derive feasibility conditions.

{\it Encodability Feasibility Condition}.
Servers in $\mathcal{N} \subseteq \overline{\Dc_i}$ collectively transmit $r|\mathcal{N}|$ rows. By Lemma~\ref{lm: dimension}, the number of available independent directions equals $\alpha \Omega_{\mathcal{N}}$. Hence, a necessary condition for full rank is
\begin{equation}
\label{eq: encodability feasibility}
    \alpha \Omega_{\mathcal{N}} \ge r|\mathcal{N}|, \quad \forall \mathcal{N} \subseteq \overline{\Dc_i}.
\end{equation}
If this holds for all $\mathcal{N} \subseteq \overline{\Dc_i}, i \in [\Ksf]$, then $\Cm_Q(\overline{\Dc_i}, \cdot)$ can be constructed full rank.

{\it Decodability Feasibility Condition}.
Consider any $\Uc \subseteq [\Nsf]$ with $|\Uc| = \sf N_r$. For $\mathcal{N} \subseteq \Uc$, the total number of available independent directions equals $n + \alpha \Omega_{\mathcal{N}}$, where $n$ comes from the unconstrained gradient part $\Cm_G$. Since servers in $\mathcal{N}$ transmit $r|\mathcal{N}|$ rows, a necessary condition for full rank is
\begin{equation}
    n + \alpha \Omega_{\mathcal{N}} \ge r|\mathcal{N}|, \quad \forall \mathcal{N} \subseteq \Uc.
\end{equation}

Appendix~\ref{pr: feasibility} verifies that both inequalities hold under the communication cost derived in Theorem~\ref{thm: communication cost}. We now prove existence assuming feasibility.

\subsection{Existence}
We first parameterize the free variables. Let $\Cm=[\Cm_G,\Cm_Q]\in \mathbb{F}_q^{r\Nsf\times r{\sf N_r}}$, where $\Cm_G \in \mathbb{F}_q^{r\Nsf\times n}$ is fully free and $\Cm_Q \in \mathbb{F}_q^{r\Nsf\times \alpha \sf \binom{N}{S}}$ has a precise zero pattern. Let $\Theta \subseteq [r\Nsf] \times [\alpha \sf \binom{N}{S}]$ be the set of free positions in $\Cm_Q$, and positions $\Theta^c$ are forces to zero. Let the collection of all free scalar variables be ${\bf x}=\{x_1,\ldots,x_m\}$, where $m=nr\Nsf+|\Theta|$. Every entry of $\Cm$ is either one of the $x_i$ or a constant 0. Hence every minor determinant becomes a multivariate polynomial in $\bf x$. Then we encode each feasibility condition as a polynomial is nonzero.

{\it Encodability condition:} For each $\overline{\Dc_i}$ where $i\in[\Ksf]$, consider the submatrix $\Cm_Q(\overline{\Dc_i}, \cdot)$ with dimension $r|\overline{\Dc_i}| \times \alpha \sf\binom{N}{S}$. The columns corresponding to unavailable keys are all zero, after removing these columns we have the submatrix $\Cm_Q(\overline{\Dc_i}, \Hc_{\overline{\Dc_i}})$ with dimension $r|\overline{\Dc_i}| \times \alpha \Omega_{\overline{\Dc_i}}$. In our scheme we have $r|\overline{\Dc_i}| = \alpha \Omega_{\overline{\Dc_i}}$. Then we define
\begin{equation}
    p_{\overline{\Dc_i}}({\bf x}) = det(\Cm_Q(\overline{\Dc_i}, \Hc_{\overline{\Dc_i}})) \in \mathbb{F}_q[{\bf x}].
\end{equation}

Then $\Cm_Q(\overline{\Dc_i}, \cdot)$ is full rank $\Longleftrightarrow p_{\overline{\Dc_i}}({\bf x})\neq 0$.

{\it Decodability condition:} For each $\Uc \subseteq [\Nsf]$ where $|\Uc|=\sf N_r$, consider the submatrix $\Cm(\Uc, \cdot)$ with dimension $r{\sf N_r} \times r{\sf N_r}$. We define
\begin{equation}
    q_\Uc({\bf x}) = det(\Cm(\Uc, \cdot)) \in \mathbb{F}_q[{\bf x}].
\end{equation}

Then $\Cm(\Uc, \cdot)$ is full rank $\Longleftrightarrow q_\Uc({\bf x})\neq 0$.

We build our main polynomial.
\begin{equation}
    P({\bf x}) = \left(\prod_{i\in[\Ksf]}p_{\overline{\Dc_i}}({\bf x})\right) \left(\prod_{\Uc \subseteq [\Nsf], |\Uc|={\sf N_r}}q_\Uc({\bf x})\right).
\end{equation}

Our goal is to show there exists $x$ such that $P({\bf x}) \neq 0$.

Each $p_{\overline{\Dc_i}}({\bf x})$ is multilinear in the entries of $\Cm_Q(\overline{\Dc_i}, \Hc_{\overline{\Dc_i}})$, hence $deg(p_{\overline{\Dc_i}})\le r|\overline{\Dc_i}|$. Each $q_\Uc({\bf x})$ is multilinear in the entries of $\Cm(\Uc, \cdot)$, hence $deg(q_\Uc)\le r{\sf N_r}$. Thus
\begin{equation}
    deg(P) \le r|\overline{\Dc_i}| \Ksf  + r \sf N_r \binom{N}{N_r}.
\end{equation}

Pick $\bf x$ uniformly at random from $\mathbb{F}_q^m$ (i.i.d. uniform for each free variable). Since $P$ is a nonzero polynomial,
\begin{equation}
    Pr[P({\bf x})=0] \le \frac{deg(P)}{q} \le \frac{r|\overline{\Dc_i}| \Ksf  + r \sf N_r \binom{N}{N_r}}{q}. 
\end{equation}

If $q> r|\overline{\Dc_i}| \Ksf  + r \sf N_r \binom{N}{N_r}$, then $Pr[P({\bf x})\neq 0]>0$, so there exists at least one assignment of the free entries over  that satisfies both properties. This gives a fully rigorous existence proof.

\section{Feasibility Proof}
\label{pr: feasibility}
In this appendix, we verify that the feasibility conditions derived in Appendix~\ref{pr: encodability} are satisfied under the communication cost specified in Theorem~\ref{thm: communication cost}. 

We first compute $\Omega_{\Nc}$.

\begin{lemma}
\label{lm: number of keys}
For any server set $\Nc \subseteq [\Nsf]$,
\begin{equation}
\label{eq: unique owned keys}
\Omega_{\Nc} = \binom{\Nsf}{\Ssf}-\binom{\Nsf-|\Nc|}{\Ssf}=\sum_{k=1}^\Ssf \binom{|\Nc|}{k}\binom{\Nsf-|\Nc|}{\Ssf-k},
\end{equation}
\end{lemma}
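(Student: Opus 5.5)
The plan is to count $\Omega_\Nc$ directly from the definition of the key sets. Recall that $\Sc_n$ is the set of (indices of) keys $Q_\Vc$ with $n\in\Vc$, where $\Vc$ ranges over $\binom{[\Nsf]}{\Ssf}$. Hence $\Sc_\Nc=\cup_{n\in\Nc}\Sc_n$ is exactly the collection of groups $\Vc\in\binom{[\Nsf]}{\Ssf}$ with $\Vc\cap\Nc\neq\emptyset$. So $\Omega_\Nc$ is the number of $\Ssf$-subsets of $[\Nsf]$ that meet $\Nc$. The two expressions in \eqref{eq: unique owned keys} then come from two standard ways of counting this family.

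For the first equality, I will count by complement. The $\Ssf$-subsets that do \emph{not} meet $\Nc$ are exactly the $\Ssf$-subsets of $[\Nsf]\setminus\Nc$, and there are $\binom{\Nsf-|\Nc|}{\Ssf}$ of them. Subtracting this from the total $\binom{\Nsf}{\Ssf}$ gives the first expression. The convention $\binom{x}{y}=0$ for $x<y$ from the Notation paragraph covers the case $\Nsf-|\Nc|<\Ssf$, where every group meets $\Nc$. It also gives $\Omega_\emptyset=0$.

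For the second equality, I will partition the groups meeting $\Nc$ by $k=|\Vc\cap\Nc|$, which ranges over $[1:\Ssf]$. A group with $|\Vc\cap\Nc|=k$ is determined by choosing $k$ elements of $\Nc$ and $\Ssf-k$ elements of $[\Nsf]\setminus\Nc$. That gives $\binom{|\Nc|}{k}\binom{\Nsf-|\Nc|}{\Ssf-k}$ groups for each $k$, and terms with $k>|\Nc|$ vanish by the same convention. As a consistency check, adding the $k=0$ term $\binom{\Nsf-|\Nc|}{\Ssf}$ recovers Vandermonde's identity $\sum_{k=0}^{\Ssf}\binom{|\Nc|}{k}\binom{\Nsf-|\Nc|}{\Ssf-k}=\binom{\Nsf}{\Ssf}$, which matches the first form.

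I do not expect any real obstacle. The only point needing care is the identification of $\Sc_\Nc$ with the groups intersecting $\Nc$: it uses that each key $Q_\Vc$ is held by exactly the servers in $\Vc$ and that distinct $\Vc$ give distinct keys, so a union over servers counts each key once. The rest is handling the degenerate binomial cases through the stated convention.
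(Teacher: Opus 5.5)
Your proposal is correct and takes the same approach as the paper. The paper also obtains the first expression by subtracting the $\binom{\Nsf-|\Nc|}{\Ssf}$ keys held only by servers outside $\Nc$, and the second by classifying keys according to how many of their $\Ssf$ holders lie in $\Nc$; your handling of the degenerate binomial cases and the Vandermonde check are harmless extras.
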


\textit{Proof of \eqref{eq: unique owned keys}.}
We focus on the servers in the set $\Nc$. There are $\binom{\Nsf}{\Ssf}$ uncoded groupwise keys in total. A key is unavailable to $\Nc$ if and only if it is shared exclusively among the remaining $\Nsf-|\Nc|$ servers. The number of such keys is therefore $\binom{\Nsf-|\Nc|}{\Ssf}$. Subtracting these unavailable keys from the total yields the first expression \eqref{eq: unique owned keys}.
The second expression follows by classifying keys according to the number of servers shared within the set $\Nc$. For a given $k \in \{1,\ldots,\Ssf\}$, the number of keys shared by exactly $k$ servers in $\Nc$ and $\Ssf-k$ servers in $[\Nsf]\setminus\Nc$ is $\binom{|\Nc|}{k}\binom{\Nsf-|\Nc|}{\Ssf-k}$. Summing over all possible values of $k$ gives the second expression in \eqref{eq: unique owned keys}. \hfill$\square$

\subsection{Encodability Feasibility Proof}

In this part, we prove the following equation
\begin{equation}
\label{eq: loose endodability feasibility}
    \forall \Nc \subseteq [\Nsf], |\Nc| \le {\sf N-M} : \alpha \Omega_\Nc \ge r|\Nc|
\end{equation}
holds in our scheme, which implies the encodability feasibility condition in Equation~\eqref{eq: encodability feasibility}.

In our scheme, each key is partitioned into $\alpha = \sf N-M$ pieces, and each server sends $r=\sf \binom{N}{S}-\binom{M}{S}$ messages. Dividing both sides of~\eqref{eq: loose endodability feasibility} by $|\Nc|$, it suffices to prove
\begin{equation}
    \frac{\alpha \Omega_\Nc}{|\Nc|} \ge r, \forall \Nc \subseteq [\Nsf], |\Nc| \le \sf N-M.
\end{equation}

Since $\frac{r}{\alpha}$ is constant, we analyze the function
\begin{equation}
    \frac{\alpha \Omega_\Nc}{|\Nc|}=\frac{\alpha \left({\sf \binom{N}{S}-\binom{N-|\Nc|}{S}}\right)}{|\Nc|},
\end{equation}
which depends only on $x=|\Nc|$. Define
\begin{equation}
    f(x)=\frac{\alpha \left({\sf \binom{N}{S}-\binom{N-{\it x}}{S}}\right)}{x}.
\end{equation}

The desired inequality becomes
\begin{equation}
    f(x) \ge r, \forall x\in[\sf N-M].
\end{equation}

Notice that $f({\sf N-M})=r$. Therefore, it suffices to show that $f(x)$ is decreasing in $x$. This is equivalent to proving
\begin{equation}
\alpha\frac{\binom{\Nsf}{\Ssf}-\binom{\Nsf-x}{\Ssf}}{x}
>
\alpha\frac{\binom{\Nsf}{\Ssf}-\binom{\Nsf-(x+1)}{\Ssf}}{x+1}.
\end{equation}
Rearranging terms yields
\begin{equation}
\binom{\Nsf}{\Ssf}>(x+1)\binom{\Nsf-x}{\Ssf} - x\binom{\Nsf-x-1}{\Ssf}.
\end{equation}
Using the binomial identity $\binom{n}{k}=\binom{n-1}{k}+\binom{n-1}{k-1}$, the right side can be rewritten as
\begin{equation}
\binom{\Nsf-x-1}{\Ssf} + (x+1)\binom{\Nsf-x-1}{\Ssf-1}.
\end{equation}

To verify the inequality, observe that
\begin{align}
\label{eq: split two parts}
\binom{\Nsf}{\Ssf}
&\overset{\eqref{eq: unique owned keys}}{=}
\binom{\Nsf-x-1}{\Ssf}+\sum_{k=1}^\Ssf \binom{x+1}{k}\binom{\Nsf-x-1}{\Ssf-k} \notag \\
&>\binom{\Nsf-x-1}{\Ssf}+ (x+1)\binom{\Nsf-x-1}{\Ssf-1},
\end{align}
where the strict inequality follows since the summation contains additional positive terms for $k\ge 2$. We finish the proof.

\subsection{Decodability Feasibility Proof}
In this part, we need to prove
\begin{equation}
    n+\alpha \Omega_\Nc \ge r|\Nc|, \forall \Nc \subseteq \Uc, \Uc \subseteq [\Nsf], |\Uc|={\sf N_r}.\notag
\end{equation}

Again let $x=|\Nc|$ and define
\begin{equation}
    g(x) = \frac{n+\alpha \left(\sf \binom{N}{S}-\binom{N-{\it x}}{S}\right)}{x},x\in[\sf N_r].
\end{equation}

The decodability condition becomes $g(x) \ge r$. By construction of the communication cost, we have
\begin{equation}
    g({\sf N_r}) = r.
\end{equation}

Since $f(x)$ decreases in $x$ and $\frac{n}{x}$ also decreases in $x$, $g(x)$ is decreasing in $x$. Hence
\begin{equation}
    g(x) \ge g({N_r}) = r,
\end{equation}
for all $x \le \sf N_r$, which establishes decodability feasibility.

Combining the two parts, both feasibility conditions required in Appendix~\ref{pr: encodability} are satisfied under the proposed communication cost.

\section{Security Proof}
\label{pr: security}
We first present the security proof for the illustrative example, where the objective is to verify that $I(X_1,X_2,X_3;g_1,g_2,g_3|g_1+g_2+g_3)=0$. We partition the matrix $\Wm$ into two submatrices $\Wm=[\Wm_G,\Wm_Q]^T$, where $\Wm_G$ consists of the first $n\Ksf$ rows of $\Wm$ corresponding to the gradient segments, and $\Wm_Q$ comprises the remaining $\sf (N-M)\binom{N}{S}$ rows corresponding to the key segments. Assuming $\Lsf=1$, we proceed as follows:
{\allowdisplaybreaks
\begin{subequations}
\begin{align}
&I(X_1,X_2,X_3;g_1,g_2,g_3|g_1+g_2+g_3) \\
=&H(X_1,X_2,X_3|g_1+g_2+g_3)-H(X_1,X_2,X_3|g_1,g_2,g_3) \\
=&H(X_1,X_2,X_3|g_1+g_2+g_3)-H(\Fm_3\Wm_Q|g_1,g_2,g_3) \\
=&H(X_1,X_2,X_3|g_1+g_2+g_3)-H(\Fm_3\Wm_Q) \label{eg: eq0} \\
=&H(X_1,X_2,X_3|g_1+g_2+g_3)- 1 \label{eg: eq1} \\
= & H(X_1,X_2,X_3,g_1+g_2+g_3)- H(g_1+g_2+g_3)-1\\
= & H(X_1,X_2,X_3)- H(g_1+g_2+g_3)-1 \label{eg: eq2} \\
= & H(X_1,X_2,X_3)-2     \\
\le & H(X_1)+H(X_2)+H(X_3) -2\\
\le & \frac{2}{3} \times 3-2 = 0, \label{eg: eq3}
\end{align}
\end{subequations}
where (\ref{eg: eq0}) follows from the mutual independence of the gradients and the keys; (\ref{eg: eq1}) holds because $\Fm_3\Wm_Q$ contains three independent key symbols, each of length $\frac{\Lsf}{3}$; (\ref{eg: eq2}) relies on the fact that the transmitted messages $(X_1,X_2,X_3)$ are sufficient to recover the sum $g_1+g_2+g_3$; and (\ref{eg: eq3}) accounts for the total transmission size, as each server transmits $r$ messages, each of length $\frac{\Lsf}{3}$.
}

Next, we provide the security proof for the general case, where we must establish $I\left(X_{1},\ldots,X_{\sf N};g_{1},\ldots, g_{\sf K} | \sum_{k\in [\Ksf]}g_k \right) = 0$. Analogous to the example, we obtain:
{\allowdisplaybreaks
\begin{subequations}
\begin{align}
    &I\left(X_{1},\ldots,X_{\sf N};g_{1},\ldots, g_{\sf K} | \sum_{k\in [\Ksf]}g_k \right) \\
    =&H\left(X_{1},\ldots,X_{\sf N}| \sum_{k\in [\Ksf]}g_k \right)-H\left(X_{1},\ldots,X_{\sf N}|g_{1},\ldots, g_{\sf K} \right) \\
    =&H\left(X_{1},\ldots,X_{\sf N}| \sum_{k\in [\Ksf]}g_k \right)-H\left(\Fm_3\Wm_Q|g_{1},\ldots, g_{\sf K} \right) \\
    =&H\left(X_{1},\ldots,X_{\sf N}| \sum_{k\in [\Ksf]}g_k \right)-H\left(\Fm_3\Wm_Q \right) \label{ge: eq0}\\
    =&H\left(X_{1},\ldots,X_{\sf N}| \sum_{k\in [\Ksf]}g_k \right) - \frac{\sf (N-M)\binom{N}{S}}{n} \label{ge: eq1}\\
    \le&H\left(X_{1},\ldots,X_{\sf N}, \sum_{k\in [\Ksf]}g_k \right) - H\left(\sum_{k\in [\Ksf]}g_k \right) \notag \\
    &- \frac{\sf (N-M)\binom{N}{S}}{n}\\
    =&H\left(X_{\Uc(1)},\ldots,X_{\Uc(\sf N_r)}, \sum_{k\in [\Ksf]}g_k \right) - 1 - \frac{\sf (N-M)\binom{N}{S}}{n} \label{ge: eq2}\\
    =&H\left(X_{\Uc(1)},\ldots,X_{\Uc(\sf N_r)} \right) - 1 - \frac{\sf (N-M)\binom{N}{S}}{n} \label{ge: eq3}\\
    \le& \sum_{i\in[\Uc]}H\left(X_{i}\right) - 1 - \frac{\sf (N-M)\binom{N}{S}}{n}\\
    \le& \frac{r}{n}{\sf N_r} - 1 - \frac{\sf (N-M)\binom{N}{S}}{n} = 0, \label{ge: eq4}
\end{align}
\end{subequations}
where (\ref{ge: eq0}) results from the independence of the gradients and the keys; (\ref{ge: eq1}) follows because $\Fm_3\Wm_Q$ comprises $\sf(N-M)\binom{N}{S}$ key pieces, each of length $\frac{\Lsf}{n}$; (\ref{ge: eq2}) is justified by the construction of $\Xc=\Cm\Fm\Wm$, where the decodability property of $\Cm$ ensures that transmissions from any set of servers $\Uc\in \sf \binom{N}{N_r}$ contain all necessary information; (\ref{ge: eq3}) holds because the sum of gradients $\sum_{k\in [\Ksf]}g_k$ is recoverable from $(X_{\Uc(1)},\ldots,X_{\Uc(\sf N_r)})$; and (\ref{ge: eq4}) reflects that each server transmits $r$ messages, each of length $\frac{\Lsf}{n}$. This concludes the proof of security for the proposed scheme.
}

\section{Derivation of Communication Cost}
\label{pr: communication cost}
We derive the communication cost by identifying the conditions under which the linear system is solvable.

Assume a communication cost of $\Rsf=\frac{r}{n}$, where each server transmits $r$ independent messages and each gradient is partitioned into $n$ segments. Each key is further partitioned into $\alpha$ segments. Under this construction, the coding matrix $\Cm$ has dimension $r\Nsf \times r{\sf N_r}$, and the demand matrix $\Fm$ has dimension $\left(n+\alpha\binom{\Nsf}{\Ssf}\right) \times \left(n\Ksf+\alpha\binom{\Nsf}{\Ssf}\right)$. For the product $\Cm\Fm$ to be well defined, the number of columns of $\Cm$ must equal the number of rows of $\Fm$, yielding the constraint
\begin{equation}
\label{eq: communication cost proof 1}
    r{\sf N_r}=n+\alpha\binom{\Nsf}{\Ssf}.
\end{equation}

Next, consider the submatrix $\Cm(\overline{\Dc_i},\Oc)$, which has dimension $(\Nsf-\Msf)r\times \alpha\binom{\Nsf}{\Ssf}$. By Lemma~\ref{lm: dimension}, the dimension of the available subspace for this matrix is $\alpha \Omega_{\overline{\Dc_i}}$, and by Lemma~\ref{lm: number of keys} $\Omega_{\overline{\Dc_i}}={\sf \binom{N}{S}-\binom{M}{S}}$. For the corresponding linear system to be solvable, the available subspace dimension must be as large as the number of columns, leading to the necessary condition
\begin{equation}
\label{eq: communication cost proof 2}
    \alpha\left({\sf \binom{N}{S}-\binom{M}{S}}\right) \ge (\Nsf-\Msf)r.
\end{equation}

Finally, eliminating $\alpha$ using \eqref{eq: communication cost proof 1} and \eqref{eq: communication cost proof 2}, we obtain the following lower bound on the communication cost:
\begin{equation}
    \Rsf = \frac{r}{n} \ge \sf \frac{\binom{N}{S}-\binom{M}{S}}{\left(\binom{N}{S}-\binom{M}{S}\right)N_r-\binom{N}{S}(N-M)}.
\end{equation}

\end{document}

%% file: macros.tex
\long\def\comment#1{}

\newfont{\bbb}{msbm10 scaled 700}

\newfont{\bb}{msbm10 scaled 1100}

\newcommand{\Cm}{{\bf C}}

\newcommand{\Fm}{{\bf F}}

\newcommand{\Wm}{{\bf W}}

\newcommand{\Cc}{{\cal C}}
\newcommand{\Dc}{{\cal D}}

\newcommand{\Gc}{{\cal G}}
\newcommand{\Hc}{{\cal H}}

\newcommand{\Nc}{{\cal N}}
\newcommand{\Oc}{{\cal O}}

\newcommand{\Sc}{{\cal S}}

\newcommand{\Uc}{{\cal U}}

\newcommand{\Vc}{{\cal V}}
\newcommand{\Xc}{{\cal X}}

\newcommand{\Zc}{{\cal Z}}

\newcommand{\Ksf}{{\sf K}}
\newcommand{\Lsf}{{\sf L}}
\newcommand{\Msf}{{\sf M}}
\newcommand{\Nsf}{{\sf N}}

\newcommand{\Rsf}{{\sf R}}
\newcommand{\Ssf}{{\sf S}}
\newcommand{\Tsf}{{\sf T}}

\newcommand{\be}{\begin{equation}}
\newcommand{\ee}{\end{equation}}
\newcommand{\bea}{\begin{eqnarray}}
\newcommand{\eea}{\end{eqnarray}}

